\documentclass[acmsmall,nonacm]{acmart}
\newif\ifpublish
\publishfalse  

\date{February 2026}

\usepackage{graphicx} 

\usepackage{amssymb}   
\usepackage{xcolor}
\usepackage{xspace} 
\usepackage{comment}
\usepackage{amsmath}
\usepackage[normalem]{ulem}
\usepackage{subcaption}
\usepackage{wrapfig}
\usepackage{algorithm}
\usepackage[inline]{enumitem}
\usepackage{algpseudocode}
\usepackage{comment}
\newcommand{\sys}{eAVID\xspace}

\ifpublish
    \newcommand{\DM}[1]{}
    \newcommand{\MKR}[1]{}
    \newcommand{\TODO}[1]{}
    \newcommand{\Rithwik}[1]{}
\else
    \newcommand{\DM}[1]{{\color{blue} DM: #1}}
    \newcommand{\MKR}[1]{{\color{purple} MKR: #1}}
    \newcommand{\TODO}[1]{{\color{red} TODO: #1}}
    \newcommand{\Rithwik}[1]{{\color{red} Rithwik: #1}}
\fi

\title{\sys: Asynchronous Verifiable Information Dispersal with Post-Dissemination Pruning}

\author{Rithwik Kerur}
\affiliation{%
  \institution{University of California, Santa Barbara}
  \country{USA}
}
\email{rkerur@ucsb.edu}

\author{Dahlia Malkhi}
\affiliation{%
  \institution{University of California, Santa Barbara}
  \country{USA}
}
\email{dahliamalkhi@ucsb.edu}

\author{Michael K. Reiter}
\affiliation{%
  \institution{Duke University}
  \country{USA}
}
\email{michael.reiter@duke.edu}

\begin{document}

\begin{abstract}
Asynchronous verifiable information dispersal (AVID) lets a sender spread a
message across $N=3F+1$ nodes such that it remains recoverable despite up to $F$
Byzantine failures. Because dispersal must complete on $N-F$ responses,
standard AVID protocols fix a $(F{+}1,\, N)$ erasure code and pay a $3\times$
storage blowup, whereas a synchronous system achieves the optimal
$3/2\times$. This cost is paid permanently: the per-node footprint is fixed at
dispersal time and does not adapt when the network turns out to be healthy and
all $N$ nodes respond.

We present \sys, an AVID protocol that decouples the storage a node retains
from the fragments it was sent. \sys encodes the message with a single
$(2F{+}1,\, 2N)$ Reed--Solomon code, commits to all $2N$ fragments under one
Merkle root, and sends each node two distinct fragments. This approach uses the same dispersal bandwidth as the standard scheme.
Dispersal completes on $N-F$ responses, as in the original AVID. Our
divergence comes post-commit: nodes continue to collect responses
asynchronously, and once a node has heard a \textsc{Done} over its adopted
root from all $N$ nodes, it can safely discard one of its two fragments
unilaterally. Because fragments are disjoint across nodes, at least $2F{+}1$
verified fragments survive.

Pruning requires no certificate, no coordination among storage nodes, and no
re-encoding. Reconstruction is unchanged as any $2F{+}1$ verified fragments
decode the message regardless of how fragments are distributed. \sys halves
steady-state per-node storage relative to the $(F{+}1,\, N)$ baseline when the
network is healthy, and degrades gracefully to the baseline footprint when it
is not.
\end{abstract}

\maketitle

\clearpage
\setcounter{page}{1}

\section{Introduction}

Distributed fault-tolerant algorithms are often designed for a double worst case. First, they wait for only $N-F$ nodes to participate, since the network may indefinitely delay messages from up to $F$ nodes. Second, they assume that up to $F$ of the responding nodes may themselves be faulty. As a result, these protocols typically incur suboptimal communication and storage costs.

Concretely, suppose a sender $S$ wishes to store a block $B$ across a system of $N$ nodes such that $B$ remains recoverable despite up to $F$ failures while minimizing communication and storage. This is known as the \emph{information dispersal} (IDA) problem~\cite{Rabin}. In Byzantine settings, where both the sender and storage nodes may behave arbitrarily, additional verification is required to guarantee unique recoverability. This strengthened problem is known as \emph{verifiable information dispersal} (VID)~\cite{GGJR}.

A classical approach to VID combines erasure coding with cryptographic verification. The sender encodes $B$ using a $(k,k+m)$ erasure code and distributes the resulting $k+m$ fragments, one per node (with certain information added for verification), such that any $k$ fragments suffice to reconstruct $B$. In a synchronous system, where message delivery is guaranteed within a known bound $\Delta$, the sender can use a $(k=2F+1,k+m=N)$ encoding and distribute one fragment to every node within $\Delta$. This incurs a communication overhead of $3/2 \times |B|$, while the total storage consumed across all $N$ nodes is likewise $3/2 \times |B|$, which is optimal for $F$-resilience. Indeed, if exactly $F$ nodes fail, the surviving nodes collectively store exactly $|B|$ data.
In contrast, \emph{asynchronous} VID protocols (AVID)~\cite{AVID,DispersedLedger,PignoletSPAA26} resort to a $(F+1,N)$ encoding. Consequently, the sender incurs a communication overhead of $3 \times |B|$, the system stores $3 \times |B|$, and after $F$ failures the surviving nodes still store $2 \times |B|$.

There is a seeming tradeoff: either rely on a strong synchrony assumption for optimal dispersal and storage complexity, or incur a $2 \times$ bandwidth and storage increase in practical settings. The natural question is whether it has to be one or the other. 

In the crash-fault setting, this question was answered by eAID~\cite{eaid},
which introduced \emph{post-dissemination pruning}: nodes reclaim storage
autonomously once dissemination completes, with no re-encoding and no
coordination. Byzantine faults make the same idea considerably harder. A
crash-fault node can take an acknowledgement at face value, but a Byzantine
sender may equivocate about which fragments it dispersed, and faulty nodes may
fabricate acknowledgements for fragments they never stored. A node that prunes
on such evidence could destroy the last copy of data it believes is safely
replicated elsewhere. Extending pruning to this setting therefore requires that
a node's decision rest on verifiable agreement about what was dispersed and on
proof that enough correct nodes actually hold it.

In this paper, we address this question with \sys, an elastic AVID protocol
where elasticity manifests itself in two complementary ways:

\begin{enumerate}
    \item \sys uses a $(2F+1,2N)$ erasure code, rather than either $(F+1, N)$ or a $(2F+1,N)$ encoding. A sender  disperses the resulting $2N$ fragments in an \emph{optimistically informed elastic} manner. Specifically, the sender can flexibly assign one or more fragments to each node, subject only to the requirement that at least $2F+1$ fragments survive despite up to $F$ failures.

    \item Nodes also manage their local storage elastically. Each node independently detects redundant fragments based on responses from other nodes and may safely discard a portion of its locally stored fragments.
\end{enumerate}

For example, the sender may initially distribute one fragment to each of the $N$ nodes, achieving the optimal communication overhead of $3/2 \times |B|$, while the system stores the same optimal amount. If exactly $F$ nodes fail, the surviving nodes collectively store exactly $|B|$. The downside is that dispersal can complete only when $N$ nodes have responded. To tolerate network delays, the sender may instead initially distribute two fragments to each node and wait for only $N-F$ responses. Later, once a node learns that responses have been received from all $N$ nodes, it may safely discard one of its two local fragments. Other combinations of fragments and responses are possible, including assigning varying amounts of fragments to different nodes, or staggering the dispersal one fragment at a time based on responses. Thus, \sys enables varying the sender's communication overhead between $3/2 \times |B|$ and $3 \times |B|$, while storage is adjusted post-dissemination automatically to the optimal amount.

Crucially, the scheme preserves a simple and uniform reconstruction procedure: any $2F+1$ fragments can recover each data block, requiring neither special bookkeeping nor a specialized reconstruction procedure.


\section{Background}
\label{sec:Background}

\subsection{Verifiable Information Dispersal}

The information dispersal problem, originally formulated in Rabin's seminal
IDA method~\cite{Rabin}, considers a sender that holds a message $M$ of size
$|M|$ and wishes to spread it across a set of $N$ nodes while providing both
data availability and storage efficiency. Rabin's formulation targets benign
(crash) faults: the dispersal must guarantee that, despite up to $F < N/2$
nodes crashing, $M$ can be reconstructed from the fragments held by the
surviving nodes.

\emph{Verifiable Information Dispersal} (VID), introduced by Garay, Gennaro,
Jutla, and Rabin~\cite{GGJR}, extends IDA to a setting in which the nodes, and
even the sender, may be Byzantine. Cachin and Tessaro~\cite{AVID} later
extended VID to the asynchronous network model (AVID), and a long line of
subsequent work~\cite{HGR, DispersedLedger, DumboMVBA, LDDRVXG21} refined its
storage and communication trade-offs.

\paragraph{The reconstruction challenge:}
Byzantine faults make reconstruction a two-sided problem. A corrupt
\emph{node} might serve a fabricated fragment during retrieval, and a corrupt
\emph{sender} might disseminate inconsistent fragments to begin with.
Reconstruction must therefore tolerate bad fragments from (i)~up to $F$ corrupt
nodes, and (ii)~honest nodes holding fragments planted by a corrupt sender.
Absent any safeguard, decoding different subsets of $F+1$ fragments could yield
a different message each time, breaking the consistency that essentially every
use of VID depends on.

\paragraph{Merkle-tree authentication.}
\sys authenticates fragments with a \textbf{Merkle tree}, one
of three standard \textit{fragment authentication} mechanisms; the other two
(hash vectors and polynomial/homomorphic commitments) are surveyed in
Section~\ref{sec:related-work}. Like those mechanisms, it provides a binding
guarantee: the sender is pinned to a single value and cannot open it to two
different values after the fact.
AVID-H~\cite{AVID} replaces the hash vector with a
Merkle tree over the $N$ fragments. The sender publishes only the root $r_M$,
and each fragment $f_i$ travels with a logarithmic-size proof $\pi_i$ attesting
that $f_i$ is the $i$-th leaf under $r_M$. Each node stores only its fragment,
an $O(\log N)$ proof, and the root.

\paragraph{Honest vs.\ corrupt senders.}
Authentication certifies only that a fragment is the one the sender
\emph{committed to}, not that the committed fragments lie on a single codeword,
so a corrupt sender can plant fragments that each verify yet decode to no
consistent message. Two remedies close this gap:
\emph{prevent} equivocation up front with the homomorphic
commitments or fingerprints, or \emph{detect} it after reconstruction.
We use the latter.

\paragraph{Post-dissemination detect-and-reject}
Another option is to catch an inconsistent sender at reconstruction: a
candidate $M$ is decoded, the Merkle tree recomputed
from the re-encoded fragments, and rejected on any mismatch with
the committed root. Since the verdict is fixed once the root is fixed at
dispersal, every honest retriever reaches the same outcome regardless of which
fragments it used, restoring consistency even against a corrupt
sender~\cite{ByzantineErasureCodedStorage, DispersedLedger}. This approach provides succinct $O(\log N)$ proofs and an $O(1)$ root, and minimal assumptions (only a collision resistant hash is required). It forgoes only recovery of
data from a \emph{corrupt} sender which VID does not need. Note that while we mention Merkle trees, this strategy could also be used with hash vectors or polynomial commitments which would have different sized proofs/roots.

\paragraph{Fragments, codes, and storage blowup.}
Storage efficiency comes from erasure coding. An erasure code is
parameterized by $(k, n)$: it encodes $M$ into $n$ \emph{fragments} such
that any $k$ suffice to reconstruct $M$. With a Reed-Solomon
code~\cite{Reed-Solomon}, each fragment has size $|M|/k$, total storage is
$\frac{|M|}{k}\times n$, and the storage blowup over $|M|$ is
$\frac{n}{k}$. In the standard asynchronous Byzantine parameterization, one
fragment is stored per node, so $n = N$, with $N = 3F+1$ and $k = F+1$.
Each node stores a fragment of size $|M|/(F+1)$ and the blowup
is $\frac{N}{F+1} \approx 3$ in the optimal-resilience regime
$F < N/3$. Dispersal completes once $N-F$ nodes have responded, while up to
$F$ corrupt nodes may contribute fabricated or mismatched fragments during
retrieval.

\medskip

\subsection{Merkle-Based AVID Solutions}
\label{sec:background-avid}
We now describe the Merkle-based AVID schemes on which \sys
builds; the remaining, non-Merkle solutions are surveyed in
Section~\ref{sec:related-work}.

\subsubsection{AVID and AVID-H}
The original AVID scheme of Cachin and Tessaro~\cite{AVID} integrates an
asynchronous reliable broadcast with erasure coding under optimal Byzantine
resilience $F < N/3$. The sender encodes $M$ using a $(k, N)$ erasure code
and computes a vector $D = [H(f_1), \ldots, H(f_N)]$ of per-fragment hashes;
receivers run a Bracha-style three-phase exchange~\cite{Bracha}
(\textsf{send}/\textsf{echo}/\textsf{ready}) carrying fragments and the hash
vector until $k+F$ matching \textsf{ready} messages on a common $D$ are
observed. AVID-H replaces the hash vector with a Merkle tree over the
fragments, so each receiver stores only its fragment, a logarithmic-size
proof, and the root: yielding communication blowup $O(N)$ and
storage blowup approaching $\frac{N}{F+1}$ (bounded by $3 + o(1)$).
AVID uses the hash vector $D$ for authentication while AVID-H uses the Merkle
Root. Neither adds a dedicated corrupt-sender remedy, instead leaning on the
inner reliable broadcast to agree on a single $D$/root before any fragment is
accepted.

AVID-H has two structural costs that motivate later work. First, every
dispersal pays for an $O(N^2)$-message reliable broadcast on full fragments, even when the network is healthy. Several later works aim to optimize the echoed payload. Second, the encoding parameters are fixed at $k = F+1$:
nodes always store a fragment of size $|M|/k$, and there is no mechanism to
shed storage when more than $k$ honest nodes successfully complete the
dispersal.

The same paper's AVID-RBC variant addresses the storage cost by layering an
additional round of erasure coding on top of AVID-H's reliable broadcast,
driving storage blowup down toward the optimal $\frac{N}{N-F}$ at the cost of
encoding the message \emph{twice}~\cite{AVID}. Concretely, the broadcast uses
an $(N-2F, N)$ code, and after the message has been committed, nodes
re-encode it with an $(N-F, N)$ code for long-term storage. The reduction is
only achieved by a coordinated re-encode, not by anything a node can do on its
own.

\section{System Model}
\label{sec:problem}
 
\paragraph{Parties and network.}
The system consists of $N$ parties, $P_1, \ldots, P_N$, connected by
pairwise authenticated point-to-point channels. At most $F < N/3$
parties may be Byzantine and behave maliciously. The remaining parties are
\textit{correct} and follow the protocol. Any party may act as a
\textit{sender} that initiates a dispersal of a message $M$. The
sender itself may be Byzantine: a key objective of our protocol is
that safety holds against a faulty sender (e.g., one that
equivocates), and that honest parties detect equivocation and reject
inconsistent dispersals. We assume static membership: the set of
parties is fixed once the protocol begins.
 
We adopt the classical \textit{asynchronous} model such that the
adversary may delay messages between honest parties arbitrarily but cannot drop them.
 
\paragraph{Cryptographic assumptions.}
We assume a computationally bounded adversary and the following
standard primitives.
\begin{itemize}
    \item \textbf{Authenticated Channels.} We assume authenticated
    point-to-point links between every pair of parties. If a correct
    party $j$ receives a message $M$ on the link from party $i$, then
    $i$ indeed sent $M$ to $j$. The adversary may delay 
    messages arbitrarily, but cannot forge, modify, or replay messages
    on links between correct parties, nor impersonate a correct
    sender. Messages between correct parties are eventually delivered.
    \item \textbf{Collision-resistant hash function.} We assume a
    cryptographic hash function $H$ for which it is computationally
    infeasible to find $x \neq x'$ with $H(x) = H(x')$. Merkle trees
    instantiated with $H$ are binding: given a root $r$, no
    polynomial-time adversary can produce two distinct valid
    openings to the same leaf position.
\end{itemize}

\paragraph{Verifiable Information Dispersal.}
\sys realizes a Verifiable Information Dispersal (VID) abstraction
with two protocols, \textsc{Disperse} and \textsc{Retrieve}, that
satisfy the following properties.
\begin{itemize}
    \item \textbf{Binding.} Once the first honest party completes
    \textsc{Disperse} for $\mathit{id}$, there is a unique value
    $v^\star \in \{\textsc{Done}(M^\star), \textsc{Abort}\}$ such that every
    honest party that completes \textsc{Disperse} for $\mathit{id}$ outputs
    $v^\star$. If $v^\star = \textsc{Done}(M^\star)$, then $M^\star$ is a
    unique committed message and every honest party that completes
    \textsc{Retrieve} on $\mathit{id}$ outputs $M^\star$.

    \item \textbf{Validity.} If the sender is honest with input $M$, then
    $v^\star = \textsc{Done}(M^\star)$ with $M^\star = M$.

    \item \textbf{Agreement.} Any two honest parties that complete
    \textsc{Retrieve} on the same $\mathit{id}$ output the same value.

    \item \textbf{Termination of dispersal.} If the sender is honest with
    input $M$, then every honest party eventually completes \textsc{Disperse}
    for $\mathit{id}$ with output $\textsc{Done}(M)$.

    \item \textbf{Totality.} If any honest party completes \textsc{Disperse}
    for $\mathit{id}$ with a \textsc{Done} output, then every honest party
    eventually completes \textsc{Disperse} for $\mathit{id}$.

    \item \textbf{Termination of retrieval.} If some honest party has
    completed \textsc{Disperse} for $\mathit{id}$ with a \textsc{Done} output,
    then every honest party that invokes \textsc{Retrieve} on $\mathit{id}$
    eventually completes it with output $M^\star$.
\end{itemize}

\textbf{Performance measures.}
With these assumptions as our starting point, we shift our focus to the metric we aim to optimize. Our primary goal is to enhance performance by optimizing  \textbf{storage cost}.
We define \textbf{storage cost} as the aggregate size of storage used across all nodes to persist the fragments associated with a message. For  a baseline,  full replication---where a message of size $B$ is sent to all $N$ nodes---incurs a total storage cost of $O(N \cdot B)$. We also define \textbf{per-node storage} as the storage required on a responsive node in the system.
\section{Solution}
\label{sec:Solution}

\subsection{Dissemination and Reconstruction}
\label{sec:dissemination-protocol}

An \sys sender starts every dissemination by encoding $M$ into $2N$
fragments using a $(2F{+}1,\, 2N)$ Reed-Solomon code, such that any
$2F{+}1$ fragments can reconstruct $M$. The sender then constructs a
Merkle tree over all $2N$ fragments, producing a single Merkle root
$r_M$ that succinctly commits to the entire set of fragments. For
each fragment $f_i$, the sender computes the corresponding Merkle
proof $\pi_i$, which allows any receiver to verify that $f_i$ is a
genuine fragment of $M$ under root $r_M$.

The sender employs a similar dissemination strategy to past
approaches, except it sends two distinct fragments to each node
instead of one. The sender could also use an adaptive strategy that varies the
number of fragments it sends to each node, but we reserve this for future work.
Note that since we use an encoding scheme with
$k = 2F{+}1$ rather than $k = F{+}1$, our fragments are half the
size, so sending two of these smaller fragments uses the same
bandwidth as sending one larger fragment under a standard
$(F{+}1, N)$ scheme. Along with each pair of fragments, the sender
transmits the corresponding Merkle proofs and the root $r_M$.
Algorithm~\ref{alg:ida-party} formalizes the protocol.

Upon receiving its \textsc{Disperse} message, a node verifies
each Merkle proof against the root $r_M$ and stores the fragments. It then
broadcasts an \textsc{Echo} carrying those
fragments and their proofs to all other nodes. A correct node echoes
at most once per dispersal, so it commits to echoing exactly one
root. A node that has not yet received a \textsc{Disperse} for its own
fragments can still participate: once it has seen \textsc{Echo}s for a
common root from $F{+}1$ distinct nodes, it decodes $M$ from the
echoed fragments, re-encodes and rebuilds the Merkle tree to confirm
the root, recovers its own pair, and echoes it and the proofs. This recovery
path ensures the protocol makes progress even when a faulty sender
selectively withholds \textsc{Disperse} messages.

Because a correct node echoes only a single root, and a node
adopts a root only after that root has been echoed by an echo quorum
of $2F+1$ nodes, two
conflicting roots can
never both be adopted: any two such quorums intersect in a correct
node, which echoed only one of them. This is what prevents a faulty
sender from equivocating about which fragments were disseminated, and
it lets each node reach a \emph{dispersal-complete} decision locally
rather than through an acknowledgement from the sender.

Unlike a leader-driven protocol where the
sender collects acknowledgements and tells everyone when dispersal has
committed, \sys lets each node reach this conclusion on its own from
the all-to-all traffic, which both reduces the sender's role to that
of an initiator and gives the protocol its agreement-property
guarantees in the face of a Byzantine sender.

Concretely, once a node has adopted a root it attempts to
reconstruct: if it holds at least $2F{+}1$ distinct verified
fragments, it decodes $M$, re-encodes, and rebuilds the Merkle tree.

If the rebuilt root matches the adopted root, the node broadcasts
\textsc{Done} over that root to all parties.
If instead the reconstructed fragments do not re-encode to
the adopted root, the node concludes that the dispersal is irrecoverably
inconsistent under the agreed root and locally outputs \textsc{Abort}.

A node that has $2F{+}1$ \textsc{Done}s received
\emph{whose value equals
its own adopted root} considers the dispersal
locally complete and outputs \textsc{Done}: at this point the message $M$ is
durably stored across the system. In the background, every node
continues to
collect \textsc{Done}s past its own $2F{+}1$ threshold. Once a
node has \textsc{Done}s from all $N$ nodes, and has itself broadcast
\textsc{Done}, it prunes its extra
fragment.
Pruning is safe in every intermediate state: at least $2F{+}1$ correct
nodes stored their assigned pairs before announcing \textsc{Done}, and each
discards at most one fragment, so $2F{+}1$ distinct verified fragments always
survive no matter which nodes have pruned (Lemma~\ref{lem:pruning-safety}).

\begin{algorithm}[!htbp]
\caption{\sys Party}
\label{alg:ida-party}
\footnotesize
\begin{algorithmic}[1]
\Require local id $j$
\Statex \textbf{State per $\mathit{id}$:} adopted root $r$ (init $\bot$); echo tallies
        $\mathit{Ech}[\cdot]$ (root $\mapsto$ set of echoers, init empty);
        first-echo map $\mathit{src}[\cdot]$ (party $\mapsto$ root, init $\bot$);
        leaf buffer $B$; fragments $\mathcal{F}$;
       \textsc{Done} tally $\mathit{Dn}[\cdot]$ (party $\mapsto$ announced root, init $\bot$);
        flags $\mathit{echoed}, \mathit{doneSent}, \mathit{decided}, \mathit{pruned}$ (all $\mathit{false}$)

\Function{Disperse}{$M$} \Comment{sender only}
    \State $(f_1, \ldots, f_{2N}) \gets \Call{Encode}{M, 2F{+}1, 2N}$;\ \
           $r_M, \{\pi_i\}_{i=1}^{2N} \gets \Call{BuildMerkleTree}{f_1, \ldots, f_{2N}}$
    \State \textbf{for each} $k \in \{1,\ldots,N\}$, where $S_k \gets \{2k{-}1, 2k\}$:
    \State \quad \textbf{send} $\langle \textsc{Disperse}, \mathit{id}, r_M, \{(i, f_i, \pi_i) : i \in S_k\} \rangle$ to $k$
\EndFunction

\Statex
\State \textbf{upon} $\langle \textsc{Disperse}, \mathit{id}, r_M, X \rangle$ from sender, \textbf{if not} $\mathit{echoed}$:
\State \quad $P \gets \{(i, f_i, \pi_i) \in X : \Call{VerifyMerkle}{r_M, i, f_i, \pi_i} \textbf{ and } i \in S_j\}$
\State \quad \textbf{if} $P \neq \emptyset$ \textbf{then} $\mathit{echoed} \gets \mathit{true}$;
       broadcast $\langle \textsc{Echo}, \mathit{id}, r_M, P \rangle$

\Statex
\State \textbf{upon} $\langle \textsc{Echo}, \mathit{id}, r_M, P \rangle$ from $k \notin \mathit{Ech}[r_M]$:
\State \quad \textbf{if} $\mathit{src}[k] \notin \{\bot, r_M\}$ \textbf{then return}
       \Comment{one echo per sender}
\State \quad $P' \gets \{(i, f_i, \pi_i) \in P : \Call{VerifyMerkle}{r_M, i, f_i, \pi_i} \textbf{ and } i \in S_k\}$
\State \quad \textbf{if} $P' = \emptyset$ \textbf{then return} \Comment{echo must carry $k$'s own verified fragment}
\State \quad $\mathit{src}[k] \gets r_M$;\ \ $\mathit{Ech}[r_M] \gets \mathit{Ech}[r_M] \cup \{k\}$;\ \
       \textbf{if} $r_M = r$ \textbf{then} $B \gets B \cup P'$;\ \ \Call{TryFinishFrom}{}
\State \quad \textbf{if not} $\mathit{echoed}$ \textbf{and} $|\mathit{Ech}[r_M]| \geq F+1$ \textbf{then}
\State \quad \quad $M' \gets \Call{Decode}{\{(i,f_i,\pi_i) \text{ echoed under } r_M\}, 2F{+}1, 2N}$; re-encode $M'$, rebuild Merkle tree
\State \quad \quad \textbf{if} resulting root $= r_M$ \textbf{then}
\State \quad \quad \quad $P_j \gets \text{this party's pair } S_j \text{ from re-encoding}$
\State \quad \quad \quad $\mathit{echoed} \gets \mathit{true}$; broadcast $\langle \textsc{Echo}, \mathit{id}, r_M, P_j \rangle$
\State \quad \textbf{if} $r = \bot$ \textbf{and} $|\mathit{Ech}[r_M]| \geq 2F+1$ \textbf{then}
       \Comment{adopt the root that reached the echo quorum}
\State \quad \quad $r \gets r_M$;\ \ $B \gets \{(i,f_i,\pi_i) \text{ echoed under } r_M : \Call{VerifyMerkle}{r,i,f_i,\pi_i}\}$;\ \
       \Call{TryFinishFrom}{};\ \ \Call{CheckDone}{}

\Statex
\State \textbf{upon} $\langle \textsc{Done}, \mathit{id}, r_k \rangle$ from $k$
       with $\mathit{Dn}[k] = \bot$:
       \Comment{one \textsc{Done} per sender}
\State \quad $\mathit{Dn}[k] \gets r_k$ 
\State \quad \textbf{if} $r = \bot$ \textbf{and} $|\{k' : \mathit{Dn}[k'] = r_k\}| \geq F+1$ \textbf{then}
\State \quad \quad $r \gets r_k$;\ \ \textbf{load} $B$ with all
       fragments echoed under $r_k$ that verify against $r$
       \Comment{may be short of $2F{+}1$ for now; later \textsc{Echo}s top $B$ up}
\State \quad \Call{TryFinishFrom}{};\ \ \Call{CheckDone}{}

\Statex
\Function{CheckDone}{} 
    \State \textbf{if} $r = \bot$ \textbf{then return}
    \State $D \gets \{k : \mathit{Dn}[k] = r\}$
    \State \textbf{if} $|D| \geq 2F + 1$ \textbf{and not} $\mathit{decided}$ \textbf{then}
    \State \quad $\mathit{decided} \gets \mathit{true}$;\ \ \textbf{output} $\langle \mathit{id}, \textsc{out}, \textsc{Done} \rangle$
    \State \textbf{if} $|D| = N$ \textbf{and} $\mathit{doneSent}$ \textbf{and not} $\mathit{pruned}$ \textbf{then}
    \State \quad $\mathit{pruned} \gets \mathit{true}$;\ \ discard elements of $\mathcal{F}$ until $|\mathcal{F}| = 1$
\EndFunction

\Statex
\Function{IsCodeword}{} \Comment{deterministic verdict against the adopted root $r$}
    \State \textbf{if} $|\{i : (i,\cdot,\cdot) \in B\}| < 2F+1$ \textbf{then return} $\mathit{false}$
    \State $M \gets \Call{Decode}{B, 2F{+}1, 2N}$; re-encode $M$, rebuild Merkle tree
    \State \textbf{return} (resulting root $= r$)
\EndFunction

\Statex
\Function{TryFinishFrom}{} \Comment{Reconstruct $\to$ Done/Abort}
    \State \textbf{if} $r = \bot$ \textbf{or} $\mathit{doneSent}$ \textbf{or} $\mathit{decided}$ \textbf{then return}
    \State \textbf{if} $|\{i : (i,\cdot,\cdot) \in B\}| < 2F+1$ \textbf{then return}
    \State \textbf{if} \Call{IsCodeword}{} \textbf{then} \Comment{codeword under $r$: broadcast \textsc{Done}}
    \State \quad \textbf{if} $|\mathcal{F}| < 2$ \textbf{then} $\mathcal{F} \gets$ this party's pair $S_j$ from re-encoding
    \State \quad $\mathit{doneSent} \gets \mathit{true}$;\ \ $\mathit{Dn}[j] \gets r$;\ \
       broadcast $\langle \textsc{Done}, \mathit{id}, r \rangle$
    \State \textbf{else} \Comment{non-codeword under the agreed root: abort locally}
    \State \quad $\mathit{decided} \gets \mathit{true}$;\ \ \textbf{output} $\langle \mathit{id}, \textsc{out}, \textsc{Abort} \rangle$
\EndFunction
\end{algorithmic}
\end{algorithm}
\begin{algorithm}[!htbp]
\caption{\sys Retrieval}
\label{alg:ida-retrieve}
\footnotesize
\begin{algorithmic}[1]
\Require adopted root $r^\star$ for $\mathit{id}$ (from Dispersal); retriever $R$
\Statex \textbf{State per $\mathit{id}$:} retrieval buffer $B_R$ (init $\emptyset$); flag $\mathit{retDone}$ (init $\mathit{false}$)

\Statex
\Function{Retrieve}{$\mathit{id}$}
    \State \textbf{if} $r^\star = \bot$ \textbf{then return} $\bot$ \Comment{no root adopted for this id}
    \State broadcast $\langle \textsc{Get}, \mathit{id}, r^\star \rangle$ to all parties
\EndFunction

\Statex
\State \textbf{upon} $\langle \textsc{Get}, \mathit{id}, r' \rangle$ from $R$: \Comment{server $P_i$}
\State \quad \textbf{if} $r = r'$ \textbf{and} $|\mathcal{F}| \geq 1$ \textbf{then} send $\langle \textsc{GetResp}, \mathit{id}, \mathcal{F} \rangle$ to $R$
       \Comment{answer with whatever fragments survive pruning}
\Statex
\State \textbf{upon} $\langle \textsc{GetResp}, \mathit{id}, Y \rangle$ from $k$: \Comment{retriever $R$}
\State \quad \textbf{if} $\mathit{retDone}$ \textbf{then return}
\State \quad $B_R \gets B_R \cup \{(i, f_i, \pi_i) \in Y : \Call{VerifyMerkle}{r^\star, i, f_i, \pi_i} \textbf{ and } i \in S_k\}$
       \Comment{verified under $r^\star$; index attributable to server $k$}
\State \quad \textbf{if} $|\{i : (i,\cdot,\cdot) \in B_R\}| \geq 2F+1$ \textbf{then}
\State \quad \quad $M \gets \Call{Decode}{B_R, 2F{+}1, 2N}$
\State \quad \quad $\mathit{retDone} \gets \mathit{true}$;\ \ \textbf{output} $\langle \mathit{id}, \textsc{ret}, M \rangle$
\end{algorithmic}
\end{algorithm}

\subsection{Post Dissemination Pruning}
\label{sec:post-dissemination}
\sys's primary departure from prior Byzantine works is that every node
can prune extra fragments post-dissemination
from its own local view of completed \textsc{Done} broadcasts.
Pruning does not require
coordination among nodes or re-coding of information. Each node decides unilaterally from what it has received, and by Lemma~\ref{lem:pruning-safety} that decision is safe regardless of which other
nodes have already pruned. It works as follows:

\begin{enumerate}
    \item \textbf{Asynchronous Collection:} After
    outputting \textsc{Done} on the initial $2F{+}1$
   \textsc{Done}s received over the
    adopted root, each node continues to accept
   \textsc{Done}s in the
    background, accumulating the set $D$ of
   senders whose \textsc{Done} over the
    adopted root has been received.

    \item \textbf{Dynamic Storage Optimization:}
    Once a node has received a \textsc{Done} over its adopted root from all $N$
    nodes, and has itself broadcast \textsc{Done}, it
    only needs to retain
    $1$ fragment, effectively cutting storage in half. Since fragments are
    unique across nodes, it does not matter
    which fragment the node discards (each node retains its
    accompanying Merkle proof for any fragment it keeps).
\end{enumerate}

Through a combination of these approaches, \sys departs from previous work by:
\begin{enumerate*}
    \item Allowing different numbers of fragments to be stored by
    different nodes: so long as $2F+1$ fragments can be collected from
    correct nodes in the system, $M$ will remain available;
    \item Not requiring re-encoding of $M$ or re-construction of the
    Merkle tree;
    \item Not impacting the recovery computation, which always takes
    $2F+1$ verified fragments and reconstructs $M$ in the same manner,
    oblivious to the assignment of fragments to nodes;
    \item Not requiring maintaining meta-information per message $M$
    beyond the Merkle root.
\end{enumerate*}

\paragraph{Bandwidth and storage trade-off.}
Standard Byzantine dispersal approaches use an $(F+1, N)$ encoding scheme,
where each node stores a single fragment of size roughly
$\frac{|M|}{F+1}$. By using a $(2F+1, 2N)$ encoding scheme and sending
two fragments to each node, \sys maintains the \emph{same} aggregate
bandwidth utilization during dissemination: each fragment is of size
roughly $\frac{|M|}{2F+1}$, and each node receives two of them, so the
per-node payload matches the standard scheme. However, if the network is
healthy and
every node announces \textsc{Done} and those announcements reach a given
node,that node can subsequently drop one of
its two fragments and
retain only a single fragment of size $\frac{|M|}{2F+1}$, effectively
cutting steady-state storage in half compared to the standard
$(F+1, N)$ baseline. In adverse conditions, \sys gracefully degrades to
the standard storage footprint, ensuring that the optimistic case is
rewarded without penalizing the pessimistic one.

\subsection{Safety and Liveness}
\label{sec:proof}

\begin{lemma}
\label{lem:root-agreement}
No two correct parties adopt distinct roots for the same
$\mathit{id}$.
\end{lemma}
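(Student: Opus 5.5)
The plan is to argue about the set of roots that correct parties ever adopt, and to show this set has at most one element. Algorithm~\ref{alg:ida-party} assigns $r$ on only two paths:
\begin{enumerate*}
\item the \emph{echo path}, where $|\mathit{Ech}[r_M]| \geq 2F+1$;
\item the \emph{Done path}, where at least $F+1$ distinct parties $k'$ have $\mathit{Dn}[k'] = r_k$.
\end{enumerate*}
Since $r$ is only assigned when $r = \bot$, each correct party adopts at most one root. It then suffices to show that every adopted root is echo-adopted by some correct party, and that any two echo-adopted roots coincide.

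\textbf{Step 1 (reduce the Done path to the echo path).} Suppose a correct party $p$ adopts $r_k$ via the Done path. Among the $F+1$ parties whose \textsc{Done} over $r_k$ it received, at least one, say $q$, is correct. A correct $q$ broadcasts \textsc{Done} only inside \textsc{TryFinishFrom}, and there only over its own adopted root. Hence $q$ adopted $r_k$ strictly before $p$ did.

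Take the adoption events of correct parties in real-time order. The first such event cannot be a Done-path adoption, since that would require an earlier correct adoption. By induction on this order, every root adopted by a correct party equals some root adopted by a correct party via the echo path. I expect this well-foundedness bookkeeping to be the main subtlety: Done-path adoptions only ever propagate roots that were already adopted, so I would phrase it as a minimal-counterexample argument.

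\textbf{Step 2 (quorum intersection for echo-path adoptions).} Suppose correct $p$ echo-adopts $r$ and correct $p'$ echo-adopts $r'$. Then $p$ counted a set $E$ of at least $2F+1$ distinct echoers for $r$, and $p'$ counted a set $E'$ of at least $2F+1$ distinct echoers for $r'$. Since $N = 3F+1$, we have $|E \cap E'| \geq F+1$, so the intersection contains a correct party $c$.

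A correct party broadcasts \textsc{Echo} at most once, because of the $\mathit{echoed}$ flag. This holds whether it echoes on receiving \textsc{Disperse} or via the $F+1$-echo recovery path, so $c$ sent exactly one \textsc{Echo} message carrying one root. Authenticated channels guarantee that no one can forge an \textsc{Echo} from $c$ for another root, and $p$ and $p'$ add $c$ to $\mathit{Ech}[\cdot]$ only upon an actual \textsc{Echo} from $c$. Therefore $r = r'$.

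Byzantine parties may send different echoes to different receivers, and the $\mathit{src}$ map only enforces one root per sender locally. This does not matter, because the argument relies only on the correct member of the intersection. Combining Steps~1 and~2, all roots adopted by correct parties equal a single echo-adopted root. This relies on the collision-resistance assumption only insofar as roots are compared as values, so no cryptographic reduction is needed here.
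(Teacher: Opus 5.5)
Your proof is correct and follows essentially the same route as the paper. Both arguments use the intersection of two $2F{+}1$ echo sets among $N = 3F{+}1$ parties, which must contain a correct single-echo party, to show that all echo-path adoptions agree. Both then use a well-founded induction over time to show that every \textsc{Done}-path adoption inherits that root from the correct member among its $F{+}1$ \textsc{Done} senders. The only difference is cosmetic: you induct over the real-time order of correct adoption events, whereas the paper inducts over the order of correct \textsc{Done} broadcasts.
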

\begin{proof}
A correct party adopts a root by one of two rules: an echo quorum of
$2F{+}1$ accepted \textsc{Echo}s, or $F{+}1$ \textsc{Done}s over a common root.

\emph{Echo-path adoptions.}
A correct party adopts $r$
by this rule only after collecting
$2F+1$ accepted \textsc{Echo}
messages for $r$ from distinct
parties. A correct party records at most one root per source, so no party contributes to the tallies of two roots at
the same adopter. Suppose correct parties adopt $r$ and $r' \neq r$. The two
echo sets each have size $2F+1$; on
$N = 3F+1$ parties they
intersect in at least $2(2F+1) - N \geq F+1$ parties, hence
in at least one correct party. A correct party sets $\mathit{echoed}$ once and
broadcasts \textsc{Echo} for a single root, so it cannot have echoed both $r$ and $r'$. Thus all correct parties that adopt a root
by this rule adopt the same
$r^\star$.

\emph{\textsc{Done}-path adoptions.} We show by induction on the order in
which correct parties broadcast \textsc{Done} that every correct \textsc{Done}
is over $r^\star$. Consider the first correct party to broadcast \textsc{Done}
for $\mathit{id}$. It adopted via the echo path and therefore announces $r^\star$. Now consider any
later correct broadcaster. It adopted either via the echo path, giving
$r^\star$, or on $F{+}1$ \textsc{Done}s over a common root. The latter set
contains at least one \textsc{Done} from a correct party, which by the induction
hypothesis is over $r^\star$, so the adopted root is $r^\star$ and the party
announces $r^\star$. Hence every correct \textsc{Done} is over $r^\star$, and a
\textsc{Done}-path adoption never yields any other root.
\end{proof}

\begin{lemma}[Root uniqueness]
\label{lem:root-uniqueness}
At most one root $r^\star$ is adopted by correct parties, and any
correct party that outputs does so with respect to $r^\star$.
\end{lemma}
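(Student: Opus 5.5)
The plan is to derive both parts from Lemma~\ref{lem:root-agreement} together with a short inspection of Algorithm~\ref{alg:ida-party}. The first part, that at most one root is adopted by correct parties, is a restatement of Lemma~\ref{lem:root-agreement}. That lemma shows that any two correct adopters hold the same root $r^\star$, whether they adopted through the echo path or the \textsc{Done} path. So all I need to observe is that the variable $r$ is written only when $r=\bot$, at the two guarded adoption sites. Hence each correct party adopts at most once, never changes its root, and the set of roots adopted by correct parties has size at most one.

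For the second part I would enumerate every place in Algorithm~\ref{alg:ida-party} where a correct party can produce an output, and show that each is tied to the adopted root $r$. There are three cases.
\begin{itemize}
\item \textsc{Done} output in \textsc{CheckDone}. The function returns immediately if $r=\bot$, and it counts only those \textsc{Done} announcements whose value equals $r$. So the output is with respect to the party's adopted root, which is $r^\star$ by the first part.
\item \textsc{Abort} output in \textsc{TryFinishFrom}. This path is also guarded by $r\neq\bot$. The verdict comes from \textsc{IsCodeword}, which decodes the buffer $B$ and compares the rebuilt Merkle root against $r$.
\item \textsc{Done} broadcasts. These are likewise sent only over $r$.
\end{itemize}

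The one real obstacle is checking that the buffer $B$ used in the \textsc{Abort} case holds only fragments verified against $r^\star$, and never leftovers from some other root. For this I would trace every write to $B$. At echo-path adoption, $B$ is reset to the fragments echoed under $r_M=r$ that verify against $r$. At \textsc{Done}-path adoption, $B$ is loaded with fragments verified against $r$. On later \textsc{Echo}s, fragments are added only when $r_M=r$. Before adoption $r=\bot$, so the condition $r_M=r$ never holds and $B$ is never written. Hence every element of $B$ is a valid opening under $r^\star$, and each decision (\textsc{Done} or \textsc{Abort}) is made with respect to $r^\star$.

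Finally, retrieval. Algorithm~\ref{alg:ida-retrieve} uses the adopted root $r^\star$, and both servers and the retriever filter by it, so outputs of \textsc{Retrieve} are also relative to $r^\star$. Combining the three output cases with the first part completes the proof.
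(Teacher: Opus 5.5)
Your proposal is correct and takes essentially the same route as the paper, which declares the lemma immediate from Lemma~\ref{lem:root-agreement} together with the observation that a correct party broadcasts \textsc{Done} or outputs only after adopting $r^\star$. Your extra checks (that $r$ is written at most once, the trace of writes to $B$, and the retrieval case) make explicit what the paper leaves implicit. Your \textsc{CheckDone} case is also stated more accurately than the paper's one-liner: a \textsc{Done}-path adopter can output \textsc{Done} there before ever evaluating \textsc{IsCodeword}, yet the output is still relative to $r$ because only \textsc{Done}s over $r$ are counted.
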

\begin{proof}
Immediate from Lemma~\ref{lem:root-agreement}: a correct party
adopts $r^\star$, evaluates \Call{IsCodeword}{}, stores its assigned pair,
and only then broadcasts \textsc{Done} or outputs, and all correct adopters
share $r^\star$.
\end{proof}

\begin{lemma}[Binding]
\label{lem:binding}
Once the first correct party outputs for $\mathit{id}$, there is a unique value
$v^\star \in \{\textsc{Done}(M^\star), \textsc{Abort}\}$ such
that any correct party that outputs for $\mathit{id}$ does so with
$v^\star$, and any correct party that completes \textsc{Retrieve} outputs
$M^\star$ (if $v^\star = \textsc{Done}(M^\star)$).
\end{lemma}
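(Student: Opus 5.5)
By Lemma~\ref{lem:root-uniqueness}, every correct party that outputs does so with respect to the single root $r^\star$. The plan is therefore to show that the verdict computed against $r^\star$ is a deterministic function of $r^\star$ alone, independent of which fragments a party happens to hold. Define $M^\star$ and $v^\star$ from $r^\star$ as follows. If there exists a message $M$ whose $(2F{+}1,2N)$ encoding has Merkle root $r^\star$, set $M^\star$ to that message and $v^\star=\textsc{Done}(M^\star)$. Otherwise set $v^\star=\textsc{Abort}$. Collision resistance makes this well defined: two distinct messages have distinct codewords, which differ at some leaf. Their trees would then yield two distinct valid openings of the same leaf position under $r^\star$, which we assume infeasible.

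\emph{Step 1: \textsc{IsCodeword} is path-independent.} Consider any correct party running \Call{IsCodeword}{} on a buffer $B$ of at least $2F{+}1$ distinct indices, each verified against $r^\star$. By Merkle binding, each verified $(i,f_i)$ is the unique leaf $i$ committed under $r^\star$. Suppose the committed leaves form a codeword of some $M$. Then every $2F{+}1$-subset decodes to $M$, the re-encoding reproduces exactly the committed leaves, the root matches, and the party gets $M=M^\star$. Suppose instead the leaves form no codeword. Then any decoded $M'$ re-encodes to a vector that differs from the committed leaves at some position; otherwise those leaves would be a codeword. The rebuilt root therefore differs from $r^\star$, again by collision resistance. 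Hence every correct party reaches the same verdict regardless of which subset it decoded from.

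\emph{Step 2: outputs agree.} A correct party outputs \textsc{Abort} only through \Call{TryFinishFrom}{} with \Call{IsCodeword}{} false, which by Step~1 happens only when $v^\star=\textsc{Abort}$. It outputs \textsc{Done} only after $2F{+}1$ \textsc{Done}s over $r^\star$, at least $F{+}1$ of which come from correct parties. A correct party broadcasts \textsc{Done} only after \Call{IsCodeword}{} returned true, which by Step~1 means $v^\star=\textsc{Done}(M^\star)$. Thus no two correct parties can output different values.

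\emph{Step 3: retrieval.} A retriever uses $r^\star$ and accepts only fragments verified against it. Every accepted fragment is therefore the committed leaf, and when $v^\star=\textsc{Done}(M^\star)$ these leaves lie on the codeword of $M^\star$. Any $2F{+}1$ of them decode to $M^\star$.

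The main obstacle is Step~1, in particular the inconsistent-sender case. We must argue that a fragment set which is not a codeword cannot, for some subset choice, decode and re-encode to a tree with root $r^\star$. The argument has to go explicitly through the binding of the Merkle root to the full leaf vector, via collision resistance, rather than through Reed--Solomon properties alone.
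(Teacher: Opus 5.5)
Your proposal is correct and follows essentially the same route as the paper. Both arguments invoke root uniqueness, split on whether $r^\star$ commits a $(2F{+}1,2N)$ codeword, use the fact that \textsc{IsCodeword} is determined by $r^\star$ alone to rule out mixed Done/Abort outputs, and let retrieval decode verified leaves to $M^\star$; where you differ is only in rigor, since you make explicit the collision-resistance step that the paper merely asserts (``no $2F{+}1$ leaves re-encode to $r^\star$'') and you avoid its claim that \textsc{Retrieve} re-encodes, which the retrieval algorithm does not do.
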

\begin{proof}
By Lemma~\ref{lem:root-uniqueness} every correct party that outputs has adopted
the same $r^\star$, which commits a unique leaf
vector. The verdict \Call{IsCodeword}{} is a deterministic
function of $r^\star$, so all correct parties compute the same value.

Two cases:

\emph{Case A: $r^\star$ commits a valid $(2F{+}1,2N)$ codeword.}
Any $2F+1$ leaves verified under $r^\star$ interpolate the same degree-$\le 2F$
polynomial and re-encode to $r^\star$, so \Call{IsCodeword}{} returns true at
every correct party that reconstructs. No correct party takes the abort branch,
and each such party broadcasts \textsc{Done}
with value $r^\star$. A correct party outputs
$\textsc{Done}(M^\star)$ upon $2F+1$ \textsc{Done}s
received over $r^\star$, where $M^\star$ is the
unique decoding. A correct \textsc{Retrieve} gathers $2F+1$ leaves under
$r^\star$, decodes, re-encodes (matching), and outputs $M^\star$. Thus
$v^\star = \textsc{Done}(M^\star)$.

\emph{Case B: $r^\star$ commits a non-codeword.} No $2F+1$ leaves
verified under $r^\star$ re-encode to $r^\star$, so \Call{IsCodeword}{} returns
false at every correct party that reconstructs: no correct party
broadcasts \textsc{Done} over $r^\star$, and no $2F+1$
\textsc{Done}
quorum can form. No correct party outputs \textsc{Done}.
Every correct party
that adopts $r^\star$ and reconstructs outputs \textsc{Abort}. Thus
$v^\star = \textsc{Abort}$.
\end{proof}

\begin{lemma}[Termination of dispersal]
\label{lem:termination}
If the sender is correct, then every correct party eventually
outputs $\textsc{Done}(M)$.
\end{lemma}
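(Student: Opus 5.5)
The plan is to trace the protocol forward from a correct sender, using only eventual delivery between correct parties and the algebra of a valid codeword. Throughout, let $r_M$ be the honest sender's root.

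\textbf{Step 1: every correct party echoes $r_M$.} The correct sender sends each party $k$ its pair $S_k$ under $r_M$. These proofs verify, so each correct party that receives its \textsc{Disperse} before echoing broadcasts \textsc{Echo} for $r_M$. No correct party can echo any other root. The only other echo rule requires $F{+}1$ echoes for a root $r'$ that decode and re-encode to $r'$. Such a set contains a correct echoer, and correct echoers echo only $r_M$. Since \textsc{Disperse} messages are eventually delivered, every correct party sets $\mathit{echoed}$, and the message it echoes is for $r_M$. That gives at least $N-F \ge 2F{+}1$ correct echoes for $r_M$. Each carries verified fragments with indices in the echoer's own $S_k$, so receivers accept them. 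Faulty parties cannot get a correct receiver to count them twice, because $\mathit{src}$ records at most one root per source.

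\textbf{Step 2: every correct party adopts $r_M$ and has enough fragments.} Every correct party eventually holds $2F{+}1$ accepted echoes for $r_M$. So if it has not already adopted a root, it adopts $r_M$ by the echo rule. By Lemma~\ref{lem:root-agreement}, any adoption, including via the \textsc{Done} path, must also be of $r_M$, because the first correct adoption is via echoes of $r_M$ (as just argued, no other root can gather a correct-containing quorum). After adoption, $B$ holds every echoed fragment under $r_M$ that verifies, and later echoes keep adding to it. The correct echoers alone contribute at least $2F{+}1$ parties, each with at least one fragment index from a disjoint set $S_k$. So $B$ eventually holds at least $2F{+}1$ distinct indices. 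Here I must check that an echo arriving before adoption is still loaded into $B$. It is, because adoption loads all fragments echoed under $r_M$.

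\textbf{Step 3: every correct party broadcasts \textsc{Done}, then outputs.} $r_M$ commits to a genuine $(2F{+}1,2N)$ Reed--Solomon codeword, so Case A of Lemma~\ref{lem:binding} applies. \Call{IsCodeword}{} returns true, and \Call{TryFinishFrom}{} broadcasts \textsc{Done} over $r_M$ with decoded message $M$. This holds unless the party has already decided. It cannot have decided \textsc{Abort}, since it never takes the abort branch. It may have decided \textsc{Done}, which is what we want. Eventually all $\ge 2F{+}1$ correct parties' \textsc{Done}s over $r_M$ are delivered. Faulty parties get only one \textsc{Done} slot each and cannot remove a correct entry. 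So \Call{CheckDone}{} fires and the party outputs $\textsc{Done}(M)$; by Lemma~\ref{lem:binding}, $M^\star=M$.

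\textbf{Main obstacle: Step 2's timing.} \Call{TryFinishFrom}{} is called after each accepted echo and after adoption, but a party could adopt before holding $2F{+}1$ indices. I would argue that the remaining correct echoes still arrive after adoption. Each such arrival satisfies $r_M=r$, adds to $B$, and re-triggers \Call{TryFinishFrom}{}, so the threshold is eventually crossed with a call made at that moment.
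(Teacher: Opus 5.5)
Your three steps follow the paper's proof (echo and adopt, reconstruct and announce \textsc{Done}, collect $2F{+}1$ \textsc{Done}s), and in places you are more careful than the paper: you rule out recovery-path echoes and \textsc{Done}-path adoptions of a foreign root, and you handle re-triggering of \Call{TryFinishFrom}{}.

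\textbf{The gap.} It is in Step 3, at exactly the case you raised and then set aside. A correct party that outputs via \Call{CheckDone}{} before its own $B$ reaches $2F{+}1$ indices has $\mathit{decided}=\mathit{true}$. Every later call to \Call{TryFinishFrom}{} then returns at its first line, so that party \emph{never} broadcasts \textsc{Done}. That is harmless for the party itself. But your next sentence, that all $\ge 2F{+}1$ correct parties' \textsc{Done}s are eventually delivered, no longer follows, and it actually fails.

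\textbf{A counterexample.} Take $F=1$, $N=4$, correct $p_1,p_2,p_3$, Byzantine $p_4$, and a correct sender.
\begin{enumerate}
\item Let $p_4$ echo its genuine pair, so $p_1,p_2$ gather echoes from $p_1,p_2,p_4$, adopt $r_M$, reconstruct, and broadcast \textsc{Done}. Meanwhile everything addressed to $p_3$ is delayed.
\item Deliver to $p_3$ only the \textsc{Done}s of $p_1,p_2$. It adopts $r_M$ by the $F{+}1$ rule while its $B$ is still empty.
\item Now $p_4$ sends \textsc{Done} to $p_3$ alone. Then $|D|=3=2F{+}1$, so \Call{CheckDone}{} outputs and sets $\mathit{decided}$.
\item When the delayed messages finally arrive, $p_3$ echoes but never announces \textsc{Done}. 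Since $p_4$ never sends \textsc{Done} to $p_1,p_2$, both are stuck at $D=\{p_1,p_2\}$ and never output.
\end{enumerate}
Nothing between correct parties is dropped, so this is a legal asynchronous execution.

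\textbf{Relation to the paper.} The paper's proof has the same hole: its Step 3 simply asserts that every correct party broadcasts \textsc{Done}, and the Totality proof repeats this. So this is not a deviation from the paper's route. It does mean the lemma is false for Algorithm~1 as written, and no argument can rescue Step 3 without a protocol change.

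\textbf{Repair.} The minimal fix is to stop a \textsc{Done} decision from blocking \Call{TryFinishFrom}{}: return early only on $\mathit{doneSent}$ or after an \textsc{Abort} decision. With that change, your Step 2 and its re-triggering argument show that every correct party eventually broadcasts \textsc{Done}. Your Step 3 then goes through as written.
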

\begin{proof}

\textbf{Step 1 (echo and adopt).} The correct sender sends each
correct party its verified pair under $r_M$. Each such party echoes $r_M$ with
its fragment. There are $\geq 2F+1$ correct
parties, so every correct party eventually collects an echo quorum for $r_M$ and
adopts $r^\star = r_M$. No correct party echoes any other root (the sender sends
only $r_M$), so no other root is adopted.

\textbf{Step 2 (reconstruct and Done).} Each \textsc{Echo}
carries the echoer's fragment. Since the pairs $S_k$ are disjoint, the echoes of
$\geq F+1$ correct parties give every correct party $\geq 2F+2$ distinct verified
leaves in $B$. \Call{TryFinishFrom}{} finds \Call{IsCodeword}{} true, stores the party's pair, and
broadcasts \textsc{Done} with value $r_M$
to all parties.

\textbf{Step 3 (output).} Every correct party thus
broadcasts \textsc{Done}.
Correct parties broadcast to all, and messages between correct parties are
eventually received, so each correct party eventually receives these
$\geq 2F+1$ \textsc{Done}s over $r_M = r^\star$. No correct party aborts.
\end{proof}

\begin{lemma}[Validity]
\label{lem:validity}
If the sender is correct with input $M$, then every correct party
that outputs does so with $\textsc{Done}(M)$.
\end{lemma}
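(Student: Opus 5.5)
The plan is to show that, with a correct sender, the only root any correct party can adopt is $r_M$. Then Case A of Lemma~\ref{lem:binding} applies, and the committed message $M^\star$ equals $M$ by injectivity of encoding together with the binding of the Merkle tree.

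\textbf{Step 1: the adopted root is $r_M$.} A correct sender sends only $r_M$, so every correct party that echoes via the \textsc{Disperse} path echoes $r_M$. The recovery path needs some root $r'$ to collect $F{+}1$ accepted echoes. Such a set contains a correct echoer, so by induction on the order of correct echoes every correct echo is over $r_M$. A root $r' \neq r_M$ can therefore gather at most $F$ echoes, all from faulty parties. This is below both the $F{+}1$ recovery threshold and the $2F{+}1$ adoption threshold. Hence every echo-path adoption is of $r_M$. The \textsc{Done}-path induction in Lemma~\ref{lem:root-agreement} then gives that every correct \textsc{Done} is over $r_M$, and every \textsc{Done}-path adoption also yields $r_M$. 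So $r^\star = r_M$.

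\textbf{Step 2: the verdict is \textsc{Done}.} Since $r_M$ is the Merkle root of a genuine $(2F{+}1,2N)$ Reed--Solomon codeword of $M$, Case A of Lemma~\ref{lem:binding} applies. Any $2F{+}1$ leaves verified under $r_M$ are, by Merkle binding under collision resistance, exactly the sender's fragments $f_i$. They therefore interpolate the same polynomial, decode to $M$, and re-encode to $r_M$. Consequently \Call{IsCodeword}{} returns true at every correct party that reconstructs, and no correct party outputs \textsc{Abort}. Any correct output is therefore \textsc{Done} with $M^\star$ equal to the unique decoding, namely $M$.

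\textbf{Step 3: output requires the quorum over $r_M$.} A correct party outputs \textsc{Done} only after $2F{+}1$ \textsc{Done}s over its adopted root, which is $r_M$ by Step 1. This gives $v^\star = \textsc{Done}(M)$. Combined with Lemma~\ref{lem:termination}, every correct party in fact outputs it.

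The main obstacle is Step 1. Lemma~\ref{lem:root-agreement} only gives uniqueness of the adopted root, not its identity. I would therefore spell out the induction on correct echoes, arguing that faulty parties alone cannot push a foreign root past the $F{+}1$ recovery threshold. The remaining steps are a direct instantiation of Case A of Lemma~\ref{lem:binding}, plus the computational binding of the Merkle tree.
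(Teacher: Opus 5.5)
Your proof is correct and follows the paper's route. You identify the adopted root as $r_M$, apply Case~A of Lemma~\ref{lem:binding}, and use Lemma~\ref{lem:termination} only for the stronger claim that every correct party actually outputs; your explicit induction on correct \textsc{Echo}s, which also covers the recovery path, usefully tightens the paper's argument, since Lemma~\ref{lem:root-agreement} on its own gives only that the adopted root is unique, not that it equals $r_M$.
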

\begin{proof}
By Lemma~\ref{lem:termination} every correct party outputs, and by
Lemma~\ref{lem:root-agreement} all adopt $r^\star = r_M$, which commits the
codeword of $M$. By Case A of Lemma~\ref{lem:binding}, the output is
$\textsc{Done}(M^\star)$ with $M^\star = M$.
\end{proof}

\begin{lemma}[Agreement]
\label{lem:agreement}
Any two correct parties that complete \textsc{Retrieve} on the same
$\mathit{id}$ output the same value.
\end{lemma}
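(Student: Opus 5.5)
We would prove Agreement by reducing it to root uniqueness together with the Merkle binding property and the uniqueness of Reed--Solomon decoding. Consider two correct retrievers $R_1$ and $R_2$ that complete \textsc{Retrieve} on $\mathit{id}$.

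\textbf{Step 1: both retrievers use the same root.} By Algorithm~\ref{alg:ida-retrieve}, each retriever broadcasts \textsc{Get} only when it has adopted a root $r^\star \neq \bot$. By Lemma~\ref{lem:root-agreement} (equivalently Lemma~\ref{lem:root-uniqueness}), both retrievers therefore use the same $r^\star$.

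\textbf{Step 2: every accepted fragment is the committed leaf.} A retriever adds a fragment to $B_R$ only if it verifies under $r^\star$ at its claimed index. By collision resistance, the Merkle tree admits no two distinct valid openings at one position. Hence every leaf $(i,f_i)$ that either retriever accepts equals the unique leaf committed at position $i$ under $r^\star$. Faulty servers can withhold fragments, but they cannot inject different ones.

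\textbf{Step 3: case split on whether $r^\star$ commits a codeword.}
\begin{itemize}
\item \emph{Codeword case.} If the committed leaf vector is a valid $(2F{+}1,2N)$ codeword, then any $2F{+}1$ distinct positions interpolate the same polynomial of degree at most $2F$. Both retrievers therefore decode the same $M^\star$, matching Case~A of Lemma~\ref{lem:binding}.
\item \emph{Non-codeword case.} The plain decoder in Algorithm~\ref{alg:ida-retrieve} could return different values from different $(2F{+}1)$-subsets. We would handle this by having retrieval apply the detect-and-reject check of Section~\ref{sec:Background}: decode, re-encode, and compare against $r^\star$. The background explicitly describes this check, and Case~A's proof assumes retrieval re-encodes. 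A subset decoding that re-encodes to $r^\star$ would exhibit a codeword whose Merkle root is $r^\star$. By binding, that codeword must be the committed vector, contradicting the assumption that it is a non-codeword. So every correct retriever outputs the same rejection value $\bot$, and the two outputs agree.
\end{itemize}

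\textbf{Step 4: retrieval requires a completed dispersal.} Alternatively, one can restrict attention to retrievals performed after some correct party has output \textsc{Done}. By Case~B of Lemma~\ref{lem:binding}, a non-codeword root never yields a \textsc{Done} output, so only the codeword case arises. We would state whichever reading the protocol supports.

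\textbf{Main obstacle.} The hard part is the non-codeword case. As written, Algorithm~\ref{alg:ida-retrieve} decodes without re-checking the root, so the proof must either invoke the re-encode check described in the text or rely on the restriction in Step~4. We would first try the re-encode argument, because it is self-contained and uses only collision resistance.
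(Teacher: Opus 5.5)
Your proposal is correct and follows essentially the paper's route. The paper just cites Lemma~\ref{lem:binding} (a common root $r^\star$, a unique $M^\star$ in the codeword case, no value in the \textsc{Abort} case), and your Steps 1--3 unpack that same argument via root agreement, Merkle binding, and uniqueness of Reed--Solomon interpolation. Your remark that the non-codeword case needs the decode/re-encode/compare check is apt: the paper's parenthetical ``retrieval returns no value'' and Case~A's ``decodes, re-encodes (matching)'' silently assume that check, which Algorithm~\ref{alg:ida-retrieve} as written omits.
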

\begin{proof}
By Lemma~\ref{lem:binding} both adopt $r^\star$ and output the unique
$M^\star$ (or both are in the \textsc{Abort} case, in which
retrieval returns no value).
\end{proof}

\begin{lemma}[Reconstructability at completion]
\label{lem:dispersal-safety}
When the first correct party outputs $\textsc{Done}(M^\star)$
for $\mathit{id}$, the fragments held by correct parties contain at least
$2F+1$ distinct verified fragments under $r^\star$, sufficient to reconstruct
$M^\star$.
\end{lemma}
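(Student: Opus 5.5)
The argument is a quorum-intersection count on \textsc{Done} messages. When the first correct party $p$ outputs $\textsc{Done}(M^\star)$, \Call{CheckDone}{} has found a set $D$ of at least $2F+1$ distinct parties whose recorded \textsc{Done} equals $p$'s adopted root. By Lemma~\ref{lem:root-uniqueness}, that root is $r^\star$. Since at most $F$ parties are Byzantine, $D$ contains at least $F+1$ correct parties. Each of them sent \textsc{Done} over $r^\star$, and because channels are authenticated, each actually executed the \textsc{Done} branch of \Call{TryFinishFrom}{}.

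\textbf{Step 1: what a correct announcer holds.} In that branch, \Call{IsCodeword}{} returned true, so by Case A of Lemma~\ref{lem:binding} the leaves committed by $r^\star$ form a valid $(2F{+}1,2N)$ codeword of $M^\star$. The party then sets $\mathcal{F}$ to its full pair $S_j$ before broadcasting. This pair comes either from its verified \textsc{Disperse} fragments or from the re-encoding of the decoded $M^\star$. In both cases the two leaves verify under $r^\star$, the re-encoded leaves because the rebuilt root equals $r^\star$ and Merkle trees are binding.

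One gap must be closed: a party that echoed only one verified fragment could in principle hold $|\mathcal{F}|=1$. The pseudocode refills $\mathcal{F}$ only when $|\mathcal{F}|<2$, which covers this case, so every correct announcer holds both of its fragments at broadcast time.

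\textbf{Step 2: pruning cannot yet have occurred.} A correct party prunes only after receiving \textsc{Done} over $r^\star$ from all $N$ parties. I also need to rule out pruning by a correct party in $D$ before the moment in question. The ordering argument I would try first is this: a party that prunes has itself output \textsc{Done} at least as early, since $N\ge 2F+1$ makes the output condition in \Call{CheckDone}{} hold no later than the pruning condition. Hence no correct party has pruned before the first correct output.

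This step is the main subtlety, because it relies on the order of operations inside \Call{CheckDone}{}. If that ordering argument proved too fragile, I would fall back on the weaker count that tolerates pruning. Each of the $F+1$ correct parties retains at least one fragment under $r^\star$. Those fragments come from disjoint pairs and are therefore distinct, but $F+1$ alone is insufficient. This is why the no-pruning-yet argument is the essential one, and the general pruning case is deferred to Lemma~\ref{lem:pruning-safety}.

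\textbf{Step 3: counting.} With at least $F+1$ correct parties each holding its full pair, and the pairs $S_k$ disjoint, correct parties jointly hold at least $2(F+1)=2F+2\ge 2F+1$ distinct leaves verified under $r^\star$. Since $r^\star$ commits a codeword of $M^\star$, any $2F+1$ of these leaves decode to $M^\star$, which is the statement.
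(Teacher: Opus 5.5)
Your argument is the paper's: at least $F{+}1$ correct members of the $2F{+}1$ \textsc{Done} quorum each hold their full pair by the $|\mathcal{F}|<2$ guard, and disjointness of the pairs gives $2F{+}2$ distinct verified fragments. Your Step~2 makes explicit something the paper's proof leaves implicit, namely that no correct party can have pruned yet; its justification is that a pruner has $\mathit{doneSent}$ set and hence never aborted, and that the $|D|\ge 2F{+}1$ output test in \textsc{CheckDone} is checked before the $|D|=N$ prune test.

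One claim in Step~1 should be tightened, and the paper glosses over it equally: fragments kept from a \textsc{Disperse} message verify under $r^\star$ only if that message carried $r^\star$. An equivocating sender can hand a correct party a pair under some $r_1 \neq r^\star$, which the $|\mathcal{F}|<2$ guard would not replace. Step~1 should therefore rest on the fact that in Algorithm~\ref{alg:ida-party} the \textsc{Disperse} handler never writes $\mathcal{F}$, so $\mathcal{F}$ is filled in \textsc{TryFinishFrom} from the re-encoding under the adopted root.
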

\begin{proof}
Outputting \textsc{Done} requires $2F+1$ \textsc{Done}s
received over
$r^\star$ from distinct senders, at least $F+1$ from correct
parties.

By authenticity of the channels a \textsc{Done} counted against a correct
party $k$ was genuinely sent by $k$, and the
$\mathit{Dn}[k] = \bot$ guard admits at most
one per sender, so
each of these $\geq F{+}1$ correct
parties executed the \textsc{Done} broadcast in
\Call{TryFinishFrom}{}.
By the $|\mathcal{F}| < 2$ guard in \Call{TryFinishFrom}{} no correct party
broadcasts \textsc{Done} without
holding its verified pair, and before broadcasting it stored
its assigned pair.
Each correct party accepts only fragments assigned to itself, and the
partition is disjoint, so distinct correct parties hold disjoint fragments. The
$\geq F+1$ correct \textsc{Done}-broadcasters therefore hold
$\geq 2(F+1) = 2F+2$
distinct verified fragments, exceeding the $(2F{+}1)$ threshold.
\end{proof}

\begin{lemma}[Pruning safety]
\label{lem:pruning-safety}
If a correct party prunes a fragment for
$\mathit{id}$, then 
$M^\star$ remains reconstructable
from the fragments held by correct parties.
\end{lemma}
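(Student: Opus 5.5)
The proof counts fragments surviving at correct parties, in the style of Lemma~\ref{lem:dispersal-safety}. Suppose a correct party $p$ prunes. By the guard in \Call{CheckDone}{}, $p$ has adopted $r^\star$ (Lemma~\ref{lem:root-uniqueness}), has broadcast \textsc{Done} itself, and holds $|D| = N$ \textsc{Done}s over $r^\star$. The $\mathit{Dn}[k]=\bot$ guard admits at most one \textsc{Done} per sender, and channels are authenticated. Hence every correct party $k$ (at least $N-F = 2F+1$ of them) genuinely executed the \textsc{Done} broadcast in \Call{TryFinishFrom}{} over $r^\star$. In particular $r^\star$ lies in Case~A of Lemma~\ref{lem:binding}: at least one correct party found \Call{IsCodeword}{} true, so $r^\star$ commits a valid codeword and $M^\star$ is well defined.

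Next, I would argue as in Lemma~\ref{lem:dispersal-safety}. The $|\mathcal{F}|<2$ guard ensures that each of these at least $2F+1$ correct \textsc{Done}-broadcasters stored its full assigned pair $S_k$, verified under $r^\star$, before broadcasting. The key observation is that the only code path that removes fragments is the pruning line, and it leaves $|\mathcal{F}| = 1$. Any correct party that prunes is one of these broadcasters, since pruning requires $\mathit{doneSent}$, and a correct party that never broadcast \textsc{Done} is not counted. Therefore at every point in time, each of the at least $2F+1$ correct broadcasters retains at least one verified fragment from its own pair $S_k$, regardless of which subset of them has pruned. This interleaving-independence is the claim made in the protocol description.

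The pairs $S_k$ are disjoint, so these retained fragments have pairwise distinct indices. This yields at least $2F+1$ distinct verified leaves under $r^\star$. Because $r^\star$ commits a $(2F{+}1,2N)$ codeword, any $2F+1$ such leaves decode uniquely to $M^\star$. Thus $M^\star$ remains reconstructable from the fragments held by correct parties, and a retriever using Algorithm~\ref{alg:ida-retrieve} collects them via \textsc{GetResp}.

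The main obstacle is the step showing that pruning is triggered by all $N$ \textsc{Done}s, and in particular that this counts every correct party rather than only the $F+1$ correct ones guaranteed by a $2F+1$ quorum. With only $F+1$ correct holders, each keeping one fragment after pruning, the count would drop to $F+1 < 2F+1$. The argument must therefore make explicit that $|D| = N$ forces all at least $2F+1$ correct parties to have stored their pairs. It must also note that Byzantine entries in $D$ are irrelevant to the count, and that a correct party's storage, once filled by \Call{TryFinishFrom}{}, is never reduced below one fragment.
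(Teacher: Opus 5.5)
Your proof is correct and is essentially the paper's argument: $|D|=N$ forces all $\ge 2F+1$ correct parties to have executed the \textsc{Done} broadcast and hence stored their pairs, pruning removes at most one fragment per party, and disjointness of the $S_k$ leaves $2F+1$ distinct verified leaves under a root that commits a codeword. One step you credit to the $|\mathcal{F}|<2$ guard, as the paper does, needs care: the retained fragment verifies under $r^\star$ only if $\mathcal{F}$ is filled from the re-encoding that \textsc{IsCodeword} checked against the adopted root, because a party that already holds a full pair from a \textsc{Disperse} under a different root (sent by an equivocating sender) skips the guard and contributes nothing under $r^\star$.
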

\begin{proof}
A correct party prunes only when $|D| = N$ and $\mathit{doneSent}$ holds.
The latter means the pruning party itself found \Call{IsCodeword}{} true against
$r^\star$, so $r^\star$ commits a codeword. The former means it
received $\langle \textsc{Done}, \mathit{id}, r^\star \rangle$ from all $N$
parties. Of
these $N$ senders at most $F$ are Byzantine, so at least
$2F+1$ are correct. By
channel authenticity each of those announcements was genuinely sent by the
correct party it is attributed to, so each of these $\geq 2F{+}1$ correct
parties executed the
\textsc{Done} broadcast and, by the $|\mathcal{F}| < 2$ guard, had stored its
assigned pair $S_k$ beforehand.

Pruning is performed independently by each party
as its own $|D|$ reaches $N$, so at any point the
system may be in a mixed
state: some correct parties have pruned and others have not,
and each may prune at a different time, and some may never prune at all.
We show reconstructability holds in
every such state.
Fragments are
distinct across parties by the disjoint partition, and pruning discards at most
one fragment per party, so each of these $\geq 2F+1$ correct parties retains at
least one verified fragment at a distinct index --- regardless of which parties
have pruned. These $\geq 2F+1$ distinct verified fragments meet the threshold of
the $(2F{+}1,2N)$ code, so $M^\star$ remains reconstructable (Case~A of
Lemma~\ref{lem:binding}).
\end{proof}

\begin{lemma}[Totality]
\label{lem:totality}
If any correct
party outputs $\textsc{Done}(M^\star)$ for $\mathit{id}$, then every correct party
eventually outputs $\textsc{Done}(M^\star)$. If any correct party outputs
$\textsc{Abort}$, then no correct party outputs $\textsc{Done}$, and every correct
party that adopts $r^\star$ outputs $\textsc{Abort}$.
\end{lemma}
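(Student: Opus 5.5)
The plan is to treat the two halves of the statement separately and reduce both to root uniqueness (Lemma~\ref{lem:root-uniqueness}) and the codeword dichotomy in the proof of Lemma~\ref{lem:binding}.

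\emph{Done half.} Suppose a correct party $p$ outputs $\textsc{Done}(M^\star)$. Then $p$ received $2F{+}1$ \textsc{Done}s over $r^\star$. At least $F{+}1$ of them come from correct parties, and by channel authenticity these were genuinely broadcast to everyone.

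Now take any correct party $q$. It eventually receives these $\ge F{+}1$ \textsc{Done}s over the common root $r^\star$. If $q$ has not yet adopted a root, the \textsc{Done}-path rule makes it adopt a root, which by Lemma~\ref{lem:root-agreement} must be $r^\star$. So every correct party eventually adopts $r^\star$.

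Next I show every correct party broadcasts \textsc{Done}. Since some correct party broadcast \textsc{Done} over $r^\star$, that party found \Call{IsCodeword}{} true. Hence $r^\star$ commits a codeword (Case~A of Lemma~\ref{lem:binding}), and no correct party ever aborts. It remains to show each correct $q$ eventually fills its buffer $B$ with $2F{+}1$ distinct verified leaves under $r^\star$.

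\emph{Main obstacle: filling $B$.} A correct \textsc{Done}-broadcaster held its pair, but it need not have echoed that pair. Instead I argue through echoes:
\begin{enumerate*}
\item The first correct \textsc{Done}-broadcaster adopted via the echo path, so $2F{+}1$ parties echoed $r^\star$, at least $F{+}1$ of them correct.
\item Those correct echoes carry verified fragments and reach every correct party.
\item Any correct party that has not echoed therefore sees $F{+}1$ echoes for $r^\star$. It decodes, checks the root (which matches, since $r^\star$ is a codeword), and echoes its own pair.
\item A correct party that already echoed did so for $r^\star$. Two correct parties cannot have echoed different roots here, since an echo via the recovery path requires $F{+}1$ echoes, one of them correct; I would need to check this inductively from the sender's \textsc{Disperse} messages.
\end{enumerate*}
So all $\ge 2F{+}1$ correct parties eventually echo $r^\star$. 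Each such echo contributes at least one verified fragment at an index disjoint from the others. Either the echo arrives after adoption and is added to $B$, or it arrived before and is loaded at adoption. Thus $B$ reaches $2F{+}1$ distinct leaves, \Call{TryFinishFrom}{} runs, and $q$ broadcasts \textsc{Done} over $r^\star$. The echo-path step of the echo argument (rather than the stored pairs) is where I expect care to be needed, especially the subtle case of a correct party that echoed a different root from a Byzantine sender's \textsc{Disperse}.

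Given all that, every correct party receives $\ge 2F{+}1$ \textsc{Done}s over $r^\star$ from correct parties, and \Call{CheckDone}{} outputs \textsc{Done}. By Lemma~\ref{lem:binding} the output is $\textsc{Done}(M^\star)$.

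\emph{Abort half.} If a correct party outputs \textsc{Abort}, it adopted $r^\star$ (Lemma~\ref{lem:root-uniqueness}) and \Call{IsCodeword}{} returned false. By determinism this means $r^\star$ commits a non-codeword, which is Case~B of Lemma~\ref{lem:binding}. Therefore:
\begin{enumerate*}
\item no correct party broadcasts \textsc{Done};
\item at most $F$ \textsc{Done}s over $r^\star$ exist, so no quorum of $2F{+}1$ forms and no correct party outputs \textsc{Done};
\item every correct adopter that reconstructs takes the abort branch.
\end{enumerate*}
For ``every correct party that adopts $r^\star$ outputs \textsc{Abort}'', I would reuse the echo-propagation argument above to show that adopters eventually collect $2F{+}1$ verified leaves, so \Call{TryFinishFrom}{} runs and aborts. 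This needs care, since a non-codeword blocks recovery-path echoes. If that fails, I would weaken the claim to adopters that reconstruct, matching the phrasing of Case~B.
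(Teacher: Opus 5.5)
Your amplification-adoption step and your final output step match the paper, but the way you fill $B$ in the \textsc{Done} half has a genuine gap. Your route needs every correct party to echo $r^\star$. Step~4 (a correct party that already echoed did so for $r^\star$) is false. The \textsc{Disperse} handler echoes whatever root the sender supplies, with no quorum check. So an equivocating sender can hand $r^\star$ to $F{+}1$ correct parties and some $r' \neq r^\star$ to the remaining $F$. Those $F$ set $\mathit{echoed}$ and never echo $r^\star$. Yet if the Byzantine parties echo $r^\star$, every correct party still adopts $r^\star$ and outputs \textsc{Done}. The induction you sketch bottoms out exactly at these \textsc{Disperse}-path echoes.

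The only correct echoes of $r^\star$ you can rely on are the $\ge F{+}1$ in the first correct \textsc{Done}-broadcaster's echo quorum. Counting ``at least one fragment per echo'', these give only $F{+}1 < 2F{+}1$ leaves. Step~3 has the mirror-image problem: a recovery-path decode from $F{+}1$ echoes needs $2F{+}1$ distinct leaves among them, so it already presupposes that those echoes carry essentially full pairs. The paper uses that premise directly and stops there. The $\ge F{+}1$ correct echoers in the quorum each carry their assigned pair, and the pairs are disjoint. So every correct party eventually receives $2F{+}2 \ge 2F{+}1$ verified leaves under $r^\star$, \Call{TryFinishFrom}{} fires, and no other party needs to echo. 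Your ``at least one fragment'' hedge is a fair reading of the handler, which echoes whenever $P \neq \emptyset$, and under that reading the paper's count fails too. But your detour does not repair it, since step~3 needs the same full-pair assumption.

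In the \textsc{Abort} half you leave ``every correct party that adopts $r^\star$ outputs \textsc{Abort}'' unproven and propose weakening it, which would not prove the lemma. The missing observation is short:
\begin{enumerate}
\item In Case~B no correct party broadcasts \textsc{Done} at all, since it would have to be over $r^\star$ with \Call{IsCodeword}{} true.
\item Hence no root can collect $F{+}1$ \textsc{Done}s, and no correct party adopts by the amplification rule.
\item Every correct adopter therefore adopted on $2F{+}1$ accepted \textsc{Echo}s, each carrying at least one verified leaf from its sender's disjoint pair.
\item So at the moment of adoption it already holds $\ge 2F{+}1$ distinct leaves. The \Call{TryFinishFrom}{} call made at adoption runs, \Call{IsCodeword}{} is false, and it outputs \textsc{Abort}.
\end{enumerate}
No echo propagation is needed, so the obstacle you anticipated (a non-codeword blocking recovery-path echoes) never arises.
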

\begin{proof}
A correct party outputs only after adopting a root; let $r^\star$ be the root it
adopted. By Lemma~\ref{lem:root-agreement} no correct party adopts a different
root, so $r^\star$ is the unique root adopted by correct parties.
By Lemma~\ref{lem:binding}, \Call{IsCodeword}{} evaluated against
$r^\star$ is a fixed value, so exactly one of the two cases below holds.

\textbf{Case A: $r^\star$ commits a codeword.}

\textbf{Step 1 (every correct party adopts $r^\star$).} The outputting party
counted $2F{+}1$ \textsc{Done}s over $r^\star$ from distinct senders, of which
at most $F$ are Byzantine, so at least $F{+}1$ are from correct parties; call
this set $Q$. Each member of $Q$ broadcast its \textsc{Done} to all parties, and
messages between correct parties are eventually received, so every correct party
eventually holds $\ge F{+}1$ \textsc{Done}s over $r^\star$ in
$\mathit{Dn}[\cdot]$. A correct party that has not already adopted therefore
adopts by the amplification rule, and by Lemma~\ref{lem:root-agreement} the root
it adopts is $r^\star$.

\textbf{Step 2 (every correct party obtains $2F{+}1$ fragments).}
By the induction in Lemma~\ref{lem:root-agreement}, the first correct party
to announce \textsc{Done} for $\mathit{id}$ adopted $r^\star$ via an echo quorum
of $2F{+}1$ accepted \textsc{Echo}s. At most $F$ of those echoers are Byzantine,
so at least $F{+}1$ are correct; call this set $Q'$. 
Members of $Q'$ broadcast their \textsc{Echo}s
to all parties, so every correct party eventually receives $\ge F{+}1$ accepted
\textsc{Echo}s under $r^\star$. Each carries its sender's assigned pair and the
pairs partition $\{1,\ldots,2N\}$, so these supply $\ge 2(F{+}1) = 2F{+}2 \ge
2F{+}1$ leaves at distinct indices, all verified under $r^\star$. A party that
adopted on \textsc{Done}s may hold fewer than $2F{+}1$ fragments at the instant
it adopts.
\Call{TryFinishFrom}{} is re-invoked on every \textsc{Echo} that extends
$B$ under the adopted root, so it eventually runs at that party with
$\ge 2F{+}1$ distinct fragments. 

\textbf{Step 3} Every correct party finds \Call{IsCodeword}{}
true, stores its pair, and broadcasts \textsc{Done}
with value $r^\star$. There are $\geq 2F+1$ correct
parties, so $\geq 2F+1$ correct
\textsc{Done}s over $r^\star$ are eventually received by
every correct
party, and each outputs $\textsc{Done}(M^\star)$.

\textbf{Case B: $r^\star$ commits a non-codeword.} We show no
correct party outputs \textsc{Done}, and every correct party that adopts
$r^\star$ outputs \textsc{Abort}.

Since $r^\star$ is a non-codeword, no $2F{+}1$ leaves verified under $r^\star$
re-encode to $r^\star$, so \Call{IsCodeword}{} is false at every correct party
that reconstructs; hence no correct party broadcasts \textsc{Done}, and no correct party outputs
\textsc{Done}. It follows that no correct party adopts by the amplification
rule either, since that rule requires $F{+}1$ \textsc{Done}s over a common
root. Every correct adopter is therefore an echo-path adopter, and adoption by
that rule requires $2F{+}1$ accepted \textsc{Echo}s carrying disjoint full
pairs, so such a party holds $\ge 2F{+}1$ distinct fragments verified under
$r^\star$; it runs \Call{TryFinishFrom}{}, finds \Call{IsCodeword}{} false, and
outputs \textsc{Abort}.
No correct party prunes either, since $|D| = N$ would require $2F{+}1$
correct parties to have announced \textsc{Done} over $r^\star$.
\end{proof}

\begin{lemma}[Retrieval termination]
\label{lem:ret-termination}
If dispersal for $\mathit{id}$
completed with a $\textsc{Done}$ output, then every
correct party that invokes \textsc{Retrieve} eventually outputs $M^\star$.
\end{lemma}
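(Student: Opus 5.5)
The plan is to use Lemma~\ref{lem:totality} to put every correct party in a state where it has adopted $r^\star$ and permanently holds at least one verified fragment. Each correct party then answers the \textsc{Get}, and together these answers supply $2F{+}1$ distinct verified leaves. Decoding is fixed by Case~A of Lemma~\ref{lem:binding}.

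\textbf{Step 1: every correct party adopts $r^\star$, and the retriever asks for it.} Some correct party output $\textsc{Done}(M^\star)$, so by Lemma~\ref{lem:totality} (Case~A) every correct party eventually adopts $r^\star$, broadcasts \textsc{Done} over $r^\star$, and outputs $\textsc{Done}(M^\star)$. In particular the invoking retriever $R$ eventually has $r^\star \neq \bot$. By Lemma~\ref{lem:root-uniqueness} this is the same root everywhere, so $R$ broadcasts $\langle \textsc{Get}, \mathit{id}, r^\star\rangle$. If $R$ invokes \textsc{Retrieve} before adopting, I will read the statement as applying once adoption has happened: the call is re-issued or deferred. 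I will state this convention explicitly.

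\textbf{Step 2: every correct server answers with fragments at its own indices.} Each correct $P_k$ has broadcast \textsc{Done}. The $|\mathcal{F}|<2$ guard in \Call{TryFinishFrom}{} means $P_k$ stored its assigned pair $S_k$ before that broadcast. Pruning only reduces $\mathcal{F}$ to exactly one element, so $|\mathcal{F}|\ge 1$ holds forever afterwards. A \textsc{Get} that arrives before $P_k$ has adopted $r^\star$ or filled $\mathcal{F}$ is the delicate point, because the handler as written simply ignores it. I would treat the \textsc{Get} as pending and answer it once the guard $r=r'\wedge|\mathcal{F}|\ge1$ becomes true; the alternative is to have the retriever re-send. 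This is the main obstacle. Without buffering or retransmission, liveness fails, so the proof must make this assumption explicit, much as the echo recovery path re-triggers \Call{TryFinishFrom}{}. With that in place, each of the $\ge 2F{+}1$ correct parties eventually sends a \textsc{GetResp}. The reply contains a fragment $f_i$ with $i\in S_k$ and a valid proof under $r^\star$. By authenticated channels $R$ attributes the reply to $k$, so the $i\in S_k$ filter accepts it. This matches the counting in Lemma~\ref{lem:pruning-safety}.

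\textbf{Step 3: decode and correctness.} The sets $S_k$ are disjoint, so the $\ge 2F{+}1$ correct replies give $\ge 2F{+}1$ distinct verified indices in $B_R$, and $R$ reaches its decode threshold. Byzantine replies can delay this point but cannot block it. Correctness of the output does not depend on which leaves $R$ uses. Every leaf in $B_R$ verifies under $r^\star$, and by Merkle binding it equals the committed leaf at that index. Since $r^\star$ commits a valid codeword (Case~A of Lemma~\ref{lem:binding}, which applies because a \textsc{Done} output occurred), any $2F{+}1$ such leaves decode to $M^\star$. Hence $R$ outputs $M^\star$.
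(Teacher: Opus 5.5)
\textbf{Gap: not every correct party need announce \textsc{Done}.} In Steps~1--2 you cite Lemma~\ref{lem:totality} for the claim that every correct party eventually \emph{broadcasts} \textsc{Done}, and hence holds at least one fragment forever. That lemma only promises that every correct party \emph{outputs} $\textsc{Done}(M^\star)$. In Algorithm~\ref{alg:ida-party} outputting does not imply announcing. \textsc{CheckDone} outputs once $|D|\ge 2F{+}1$ without looking at $\mathit{doneSent}$, and once $\mathit{decided}$ is set, \textsc{TryFinishFrom} returns immediately. So a correct party can collect $2F{+}1$ \textsc{Done}s (e.g.\ $F{+}1$ correct and $F$ Byzantine) before it has $2F{+}1$ verified leaves. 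It then outputs \textsc{Done} but never stores $S_k$ and never announces. It need not hold any fragment at all, e.g.\ if a faulty sender withheld its \textsc{Disperse}. Step~3 of the paper's Totality proof overlooks the same thing. Your count of $\ge 2F{+}1$ correct responders, each with at least one fragment, therefore fails in such runs.

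\textbf{The fix is the paper's case split on pruning.} A prune requires a \textsc{Done} from all $N$ parties. If some correct party never announces, nobody ever prunes. Then the $\ge F{+}1$ correct announcers in the quorum behind the first \textsc{Done} output keep their full pairs, giving $\ge 2F{+}2$ distinct indices. If anyone ever prunes, all $\ge 2F{+}1$ correct parties have announced, and your count applies. This is why the paper's proof does not need Totality: it only counts holders witnessed by the \textsc{Done} evidence itself.

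\textbf{Your \textsc{Get} observation is correct, and the paper's proof shares that hole.} The paper's phrase ``on receiving $R$'s \textsc{Get} it replies'' presumes that each holder already passes the guard $r=r'$, $|\mathcal{F}|\ge 1$ when the \textsc{Get} arrives. Consider this schedule:
\begin{enumerate}
\item The \textsc{Get} reaches the $F$ correct parties outside the first quorum before they pass that guard, so they ignore it.
\item Those parties later announce, so the $F{+}1$ correct parties inside the quorum see all $N$ \textsc{Done}s and prune.
\item Only then does the \textsc{Get} reach the $F{+}1$ inside the quorum.
\end{enumerate}
With the Byzantine servers silent, $R$ collects at most $F{+}2<2F{+}1$ indices (for $F\ge 2$) and never decodes. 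So your buffering or re-send convention, together with deferring \textsc{Retrieve} until $R$ has adopted, is genuinely needed. With that convention, the two-case argument above completes the proof, provided the cases are taken over the whole run (does anyone ever prune?) rather than at a single instant.
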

\begin{proof}
A $\textsc{Done}$ output required $2F+1$ $\textsc{Done}$
announcements over
$r^\star$, at least $F+1$ from correct parties, each of which stored its assigned
pair before broadcasting (Lemma~\ref{lem:dispersal-safety}).

\emph{No correct party has pruned.} The $\ge F{+}1$ correct announcers
still hold their full pairs, i.e.\ $\ge 2(F{+}1) = 2F{+}2$ verified fragments;
by disjointness of the partition these indices are distinct.

\emph{Some correct party has pruned.} That party pruned only on $|D| = N$
with $\mathit{doneSent}$ set, so by the counting in
Lemma~\ref{lem:pruning-safety} at least $2F{+}1$ correct parties had stored
their assigned pairs before announcing, and each retains at least one verified
fragment at a distinct index after pruning. This bound is independent of which
of them have pruned so far, so it holds in every mixed state.

In either case the correct parties jointly hold $\ge 2F+1$ distinct fragments
verified under $r^\star$. Every such holder $P_i$ has $r = r^\star$ and
$|\mathcal{F}| \ge 1$, so on receiving $R$'s $\textsc{Get}$ it replies with
$\textsc{GetResp}$. $R$ accumulates every
correct-held fragment into $B_R$, and since these number $\ge 2F+1$ at distinct
indices, $R$'s threshold $|\{i : (i,\cdot,\cdot) \in B_R\}| \ge 2F+1$ eventually
holds. Since dispersal completed $\textsc{Done}$, $r^\star$ commits a codeword
(Case~A of Lemma~\ref{lem:binding}) and $R$ outputs its decoding $M^\star$.
\end{proof}

We show that our protocol establishes the binding,
validity, agreement,
termination of dispersal, totality, and termination of retrieval as defined in
Section~\ref{sec:problem}.
Lemma~\ref{lem:pruning-safety} shows that these properties continue to hold
after post-dissemination pruning, in every state reachable by any subset of
correct parties having pruned.
\section{Implementation and Evaluation}
To demonstrate the agility and generic nature of \sys, we are integrating it
into several Byzantine replication and decentralized storage systems, including
Autobahn~\cite{Autobahn}, Walrus~\cite{danezis2025walrusefficientdecentralizedstorage}, Bullshark~\cite{Bullshark}, and
Alpenglow~\cite{Alpenglow}, and instrumenting each to measure dispersal
bandwidth, per-node storage, and end-to-end latency under both healthy and
adversarial network conditions. Preliminary results show that \sys matches the throughput and latency of the
dispersal each system uses today, while halving steady-state
per-node storage whenever the network is healthy enough for every node to
complete dispersal. Because pruning is optimistic and taken independently by
each node, the storage saving degrades gracefully rather than abruptly. Under
adverse conditions \sys falls back to the same footprint as the baseline, so the
optimistic case is rewarded without penalizing the pessimistic one. A full
implementation and evaluation is deferred to the full paper.
\section{Related Work}
\label{sec:related-work}

\subsection{Fragment Authentication Beyond Merkle Trees}
Besides the Merkle tree used by \sys, solutions typically authenticate fragments in
one of two other ways. As with the Merkle tree, each provides a binding
guarantee.

\paragraph{(i)~Hash vector.}
The sender publishes a vector
$D = [H({f}_1), \ldots, H({f}_N)]$ of per-fragment hashes, and a fragment is
accepted only if it matches its entry in $D$. This is the mechanism of the
original AVID~\cite{AVID}; it requires $O(N)$ authentication data per node.

\paragraph{(ii)~Polynomial / homomorphic commitments.}
The most recent family
attaches a constant-size or aggregatable cryptographic opening
that binds a fragment directly to $M$.
Two schemes anchor this family:
\begin{itemize}
  \item \textbf{Pedersen commitments}~\cite{Pedersen} have the form
    $c = g^{m} h^{r}$ in a prime-order group; they are
    computationally binding, and additively homomorphic. Pedersen commitments also provide the hiding property, which is not needed in VID. The additive
    homomorphism lets a verifier combine per-fragment commitments exactly as
    the linear erasure code combines fragments, so a fragment can be checked
    against an aggregate commitment to $M$. Pedersen needs \emph{no trusted
    setup}: its parameters are two generators with unknown relative discrete
    logarithm, which can be sampled transparently in any prime-order group.
    It is, however, not succinct: a single commitment binds one value, so
    authenticating a coded message uses a vector of $O(N)$ commitments, and
    checking a fragment is an $O(N)$ multi-exponentiation over that vector.
  \item \textbf{KZG polynomial commitments}~\cite{KZG} treat the encoded
    fragments as evaluations $\phi(1)$, $\ldots$, $\phi(N)$ of a single low-degree
    polynomial $\phi$. The sender commits to $\phi$ with one constant-size
    group element, and each fragment ships with a constant-size opening proof
    verified by a single pairing check. Both are independent of the number of
    fragments, unlike a Merkle root's $\Theta(\log N)$ authentication path, and
    the commitment is additively homomorphic, so a commitment to a linear
    combination of fragments can be derived from the underlying commitments.
    This succinctness rests on a \emph{trusted setup}: the structured reference
    string (SRS) is generated from secret randomness that must be destroyed,
    and should it leak, binding fails. KZG further requires pairing-friendly
    groups and more expensive proof generation, and the trusted-setup
    assumption is undesirable or unavailable in some deployments, motivating
    transparent alternatives.
\end{itemize}

\paragraph{Up-front prevention.}
A homomorphic commitment or fingerprint lets a verifier reject any off-codeword
fragment on receipt. A linear-sized one --- the Pedersen commitment vector, or
the homomorphic fingerprints of AVID-FP~\cite{HGR} --- binds the whole codeword
and prevents equivocation outright, at $O(N)$ per fragment. Constant-sized KZG
commitments require additional machinery to match this guarantee: prior work
either adds an explicit degree proof~\cite{KZGSecurityVSS} or restructures the
commitment with column commitments and row-wise encoding~\cite{SemiAVIDPR}, both
of which reintroduce an $O(N)$ vector.

\subsection{Existing AVID Solutions}
\label{sec:related-avid}

We now survey representative AVID solutions and discuss how each one trades
off storage, communication, and protocol complexity. Across the landscape,
AVID schemes differ primarily along three axes: (i)~the encoding parameters
and the consequent storage blowup, (ii)~the authentication mechanism that
provides binding (one of the three techniques above), and (iii)~how dispersal
is coordinated --- in particular, whether it relies on an inner reliable
broadcast, on signatures, or on multi-round gossip among storage nodes.

\subsubsection{Fingerprint-Based AVID (AVID-FP)}

Hendricks, et al. (AVID-FP)~\cite{HGR} optimize the original
AVID protocol by replacing the plain cross-checksum with \textbf{homomorphic
fingerprints}. The sender erasure-codes $M$ and computes a fingerprinted
cross-checksum, sending each receiver its fragment together with this
checksum. Because the fingerprints preserve the algebraic constraints of the
erasure code, each receiver can independently verify that its fragment
corresponds to the original data block. This removes the need for receivers
to echo their large, full fragments to one another; instead, receivers reach
agreement by reliably broadcasting only the succinct fingerprinted
cross-checksum.
This approach prevents equivocation upfront. The fingerprints \emph{are} the
homomorphic openings, so binding and corrupt-sender prevention coincide and
neither detect-and-reject nor a fragment-level reliable broadcast is needed.
AVID-FP's contribution is therefore largely on the coordination axis as nodes agree on the succinct checksum rather than echoing full fragments.

\subsubsection{Polynomial-Commitment AVID}
A more recent line~\cite{AlhaddadDVZ21} uses polynomial commitments (e.g.,
KZG) in place of homomorphic fingerprints to compress per-fragment
authentication data from $O(N)$ to $O(1)$. The sender commits
to a polynomial whose evaluations are the fragments, distributes per-fragment
openings, and each receiver verifies its share against the single
commitment. With the standard $k = F+1$ parameterization, each node still
stores a fragment of size $|M|/(F+1)$, so the dispersal-time storage blowup
is $\frac{N}{F+1} \approx 3$ like the hash- and Merkle-based schemes; what
these schemes gain is $O(1)$ authentication data and communication blowup
that asymptotically matches lower bounds~\cite{LDDRVXG21}. Variants
that re-encode for long-term storage can approach the optimal blowup
$\frac{N}{N-F}$. The trade-off is a heavier cryptographic setup --- trusted
or transparent SRS, pairing-based verification, and more expensive proof
generation --- and, as with the schemes above, a static post-dispersal
storage profile: the per-node fragment size is fixed at dispersal time and
does not adapt as additional honest nodes acknowledge.

\subsubsection{Dispersal in Production BFT}
Several recent BFT protocols integrate erasure-coded dispersal with consensus
as a way to amortize the cost of broadcasting large transaction batches.
DispersedLedger~\cite{DispersedLedger} decouples consensus on a small header
from later retrieval of the full payload, allowing the payload to remain
dispersed across replicas until needed. The dispersal layer uses a fixed
$(k, N)$ AVID instantiation with $k = F+1$; nodes store a single fragment of
size $|M|/(F+1)$ and the storage profile is invariant to network conditions.
Its dispersal layer is a Merkle-based AVID paired with the detect-and-reject remedy.

Kudzu~\cite{Kudzu} pursues a different point in the design space: a
leader-based BFT atomic broadcast protocol that integrates erasure-coded
dispersal directly into its voting path. The leader sends erasure-coded
fragments of each proposed block to all other replicas, which in turn
re-broadcast their fragments together with first-round votes; by combining
fragments received from a sufficient number of replicas, every replica can
reconstruct and finalize the block in just $2\delta$ network delays, where
$\delta$ bounds the network delay, on the fast path. Kudzu uses a parameterized resilience regime $N = 3F + 2P + 1$
(where $F$ is the worst-case Byzantine threshold and $P$ controls the
fast-path quorum) and chooses its encoding parameters to match worst-case
dispersal needs across this regime; as with DispersedLedger, the per-replica
storage footprint is fixed at dispersal time and does not adapt to runtime
network behavior.
Kudzu's distinguishing choice is on the coordination
axis: rather than a standalone reliable broadcast or signature gather,
dispersal is folded into the consensus voting path.

\medskip
The common thread across these schemes is that the per-node storage profile
is fixed at dispersal time. AVID-RBC is the partial exception: it does drive
storage down toward the optimal $\frac{N}{N-F}$, but only by encoding the
message a second time after it has been committed and re-distributing the
fragments. This requires a coordinated re-encode, and isn't something
individual nodes can perform autonomously. In every case, once the network turns out to be healthy
and all $N$ nodes acknowledge, no node can safely shed its fragment, because
doing so requires verifiable agreement on which fragments are held where.
\sys closes this gap using a single, static $(2F+1, 2N)$ encoding with a
single Merkle root per message.

Every correct node adopts that same root, and a node's \textsc{Done} announcement
over it attests that the sender stored the fragments the root commits to. A
node that has heard from all $N$ nodes can therefore decide \emph{on its own}
to discard a fragment. Because fragments are disjoint across nodes, at least
$2F+1$ verified fragments survive whichever nodes prune and whichever do not,
so no node's decision depends on what the others chose. Pruning needs no
certificate, no re-encoding, and no coordination among storage nodes.

\newpage
\bibliographystyle{plain}
\bibliography{references}

\end{document}